\newcommand{\TODO}[1]{\textcolor{red}{[TODO\@ifnotempty{#1}{: #1}]}}
\title{\textbf{Efficient algorithms for certifying lower bounds on the discrepancy of random matrices}}
\date{}
\author[1]{Prayaag Venkat\footnote{Email: \textit{pvenkat@g.harvard.edu}. Part of this work was done while visiting the Simons Institute for the Theory of Computing. Part of this work was done at Harvard, supported by an NSF Graduate Fellowship under grant DGE1745303 and Boaz Barak's Simons Investigator Fellowship, NSF grant DMS-2134157, DARPA grant W911NF2010021, and DOE grant DE-SC0022199, support from Oracle Labs and past support by the NSF, as well as the Packard and Sloan foundations and the BSF.}}
\begin{document}
\maketitle
\begin{abstract}
    We initiate the study of the algorithmic problem of certifying lower bounds on the discrepancy of random matrices: given an input matrix $A \in \R^{m \times n}$, output a value that is a lower bound on $\mathsf{disc}(A) = \min_{x \in \{\pm 1\}^n} \norm{Ax}_\infty$ for every $A$, but is close to the typical value of $\mathsf{disc}(A)$ with high probability over the choice of a random $A$. This problem is important because of its connections to conjecturally-hard average-case problems such as negatively-spiked PCA~\cite{bandeira2019computational}, the number-balancing problem~\cite{gamarnik2021algorithmic} and refuting random constraint satisfaction problems~\cite{raghavendra2017strongly}. We give the first polynomial-time algorithms with non-trivial guarantees for two main settings. First, when the entries of $A$ are \iid standard Gaussians, it is known that $\mathsf{disc} (A) = \Theta (\sqrt{n}2^{-n/m})$ with high probability~\cite{chandrasekaran2014integer,aubin2019storage,turner2020balancing} and that super-constant levels of the Sum-of-Squares SDP hierarchy fail to certify anything better than $\mathsf{disc}(A) \geq 0$ when $m < n - o(n)$~\cite{ghosh2020sum}. In contrast, our algorithm certifies that $\mathsf{disc}(A) \geq \exp(- O(n^2/m))$ with high probability. As an application, this formally refutes a conjecture of Bandeira, Kunisky, and Wein~\cite{bandeira2019computational} on the computational hardness of the detection problem in the negatively-spiked Wishart model. Second, we consider the \textit{integer partitioning problem}: given $n$ uniformly random $b$-bit integers $a_1, \ldots, a_n$, certify the non-existence of a \emph{perfect partition}, i.e. certify that $\mathsf{disc} (A) \geq 1$ for $A = (a_1, \ldots, a_n)$. Under the scaling $b = \alpha n$, it is known that the probability of the existence of a perfect partition undergoes a phase transition from 1 to 0 at $\alpha = 1$~\cite{borgs2001phase}; our algorithm certifies the non-existence of perfect partitions for some $\alpha = O(n)$. We also give efficient \textit{non-deterministic} algorithms with significantly improved guarantees, raising the possibility that the landscape of these certification problems closely resembles that of e.g. the problem of refuting random 3SAT formulas in the unsatisfiable regime. Our algorithms involve a reduction to the Shortest Vector Problem and employ the Lenstra-Lenstra-Lovász algorithm.
\end{abstract}

\section{Introduction}
The key object of study in this paper is the \emph{discrepancy} of a given matrix $A \in \R^{m \times n}$, defined as
\begin{equation}
\label{eqn:disc-defn}
\disc (A) = \min_{x \in \{\pm 1\}^n} \norm{Ax}_{\infty}.
\end{equation}
The problem of giving worst-case bounds on the discrepancy of matrices $A$ satisfying various assumptions has received intense study (see e.g. the books~\cite{matousek1999geometric,chazelle2000discrepancy,chen2014panorama} and references therein) and is connected to many fundamental problems in theoretical computer science, combinatorics, statistics and beyond. While much of past work has focused on proving such bounds non-constructively, recent research (see the survey~\cite{bansal2014algorithmic}) considers the algorithmic \emph{search problem}: given as input a matrix $A$, is there a polynomial-time algorithm that produces a signing $x \in \{\pm 1\}^n$ so that $\norm{Ax}_\infty$ is close to $\disc(A)$? 

Charikar, Newman and Nikolov~\cite{charikar2011tight} showed it is $\NP$-hard to distinguish between matrices $A \in \{0,1\}^{m \times n}$ with discrepancy zero and those with discrepancy $\Omega(\sqrt{n})$, when $m = O(n)$. Given this result, it is natural to study the discrepancy in an average-case setting in which $A$ is taken to be a random matrix. The study of this average-case setting is also motivated by the task of covariate balancing in randomized controlled trials~\cite{harshaw2019balancing,krieger2019nearly,turner2020balancing}. A sequence of works~\cite{karmarkar1982differencing,costello2009balancing,boettcher2008analysis,chandrasekaran2014integer,aubin2019storage,turner2020balancing} studying this problem has led to the following state-of-the-art non-algorithmic result~\cite{chandrasekaran2014integer,aubin2019storage,turner2020balancing}: if the entries of $A$ are \iid standard Gaussian random variables, then with high probability it holds that $\disc (A) = \Theta (\sqrt{n}2^{-n/m})$. Furthermore, Turner, Meka and Rigollet give a polynomial-time algorithm (which is a generalization of the classic Karmarkar-Karp algorithm~\cite{karmarkar1982differencing}) that finds a signing achieving discrepancy $\exp(-\Omega(\log^2 (n)/m))$ with high probability, provided that $m = O(\sqrt{\log(n)})$. This begs the question of whether or not this problem exhibits a statistical-to-computational gap: does there exist a polynomial-time algorithm that can compute with high probability a signing that achieves a discrepancy value at most $O(\sqrt{n}2^{-n/m})$ on Gaussian input $A$?

Recently, Gamarnik and K{\i}z{\i}lda{\u{g}}~\cite{gamarnik2021algorithmic} proved that in the $m=1$ setting, the set of signings achieving low discrepancy value for random $A$ satisfies the \emph{Overlap Gap Property} (OGP), which is thought to be an indicator of algorithmic hardness (see the survey~\cite{gamarnik2021overlap}). While they formally show that the class of ``stable'' algorithms fails to produce signings with discrepancy value smaller than $\exp(-\omega(n/\log^{1/5}(n)))$, they establish that OGP holds up to discrepancy value $\exp(-\omega(\sqrt{n \log (n)}))$. Using statistical physics-inspired techniques, several works~\cite{borgs2009proofi,borgs2009proofii,aubin2019storage,perkins2021frozen,abbe2022proof,gamarnik2022algorithms} have given evidence of the presence of statistical-to-computational gaps in average-case discrepancy problems.

Inspired by a rich body of research on the problem of certifying the unsatisfiability of random constraint satisfactions problems (CSPs) \cite{raghavendra2017strongly,kothari2017sum,bandeira2019computational,bandeira2021spectral}, we initiate the study of the algorithmic problem of efficiently \emph{certifying} lower bounds on the discrepancy of random matrices. More specifically, we ask: is there an efficient algorithm $\ALG$ which outputs a value $\ALG(A)$ on input $A$ such that for every $A$ it holds that $\ALG(A) \leq \disc(A)$, but for random $A$, $\ALG(A)$ is close to the true high-probability value $\Theta (\sqrt{n}2^{-n/m})$? While prior works have only focused on the search problem, we believe the certification problem is well-motivated for the following reasons. 

First, a natural approach to understanding the complexity of \emph{finding} low-discrepancy signings is to study the problem of \emph{distinguishing} a random matrix with \iid\,$\calN(0,1)$ entries that has discrepancy $\Theta(\sqrt{n} 2^{-n/m})$ with high probability from a random matrix  with a``planted'' signing that attains significantly smaller discrepancy. Bandeira, Kunisky, and Wein~\cite{bandeira2019computational} showed that a problem of this type, called the detection problem in the negatively-spiked Wishart model, is hard for the class of low-degree polynomial algorithms in some regime of parameters and conjectured that the same should be true for all polynomial-time algorithms. Observe that if an algorithm can solve the harder problem of certifying $\disc(A) \geq \delta$ with high probability for a Gaussian matrix $A$ and some $\delta > 0$, then it can distinguish such an $A$ from any family of matrices with discrepancy smaller than $\delta$.

Second, the certification problem has been thoroughly studied in the context of random CSPs \cite{raghavendra2017strongly,kothari2017sum,bandeira2019computational,bandeira2021spectral} and has connections to cryptography~\cite{applebaum2010public}, learning theory~\cite{daniely2016complexity}, and proof complexity~\cite{fleming2019semialgebraic}. This body of work has amassed strong evidence of the optimality of semidefinite programming (SDP)-based algorithms for a wide class of average-case problems exhibiting statistical-to-computational gaps. Given the relative scarcity of algorithms for solving average-case discrepancy problems, we hope that further study of the certification problem will inspire the development of novel algorithmic techniques and candidate optimal algorithms.

Finally, there is a long history of works in computer science and discrete mathematics designing efficent algorithms to complement non-constructive proofs of combinatorial results. The known proof that the discrepancy of a $m \times n$ Gaussian matrix is at least $\Omega(\sqrt{n} 2^{-m/n})$ with high probability makes use of the first-moment method. So, the naive algorithm to certify this fact simply enumerates the discrepancy values of all $2^n$ possible signings. For this reason, the problem of certifying average-case discrepancy lower bounds is non-trivial and thematically aligned with a large body of research that aims to characterize when non-constructive proofs can be made algorithmic.

Inspired by the success of convex relaxation techniques for certification problems in the context of random CSPs, it is natural to ask whether these techniques are applicable to certifying lower bounds on the discrepancy of random matrices. Interestingly, it is known that the $n^{\Omega(1)}$-degree Sum-of-Squares (SoS) SDP relaxation of~\ref{eqn:disc-defn} has value equal to $0$ with high probability when $m < n - o(n)$~\cite{ghosh2020sum}. Given the success of SoS in the context of random CSPs, this negative result begs the question of whether there exists \emph{any} polynomial-time algorithm certifying a value better than zero. In this paper, we give the first efficient certification algorithm with non-trivial guarantees for certifiying average-case discrepancy lower bounds.

\begin{theorem}
\label{thm:disc-poly}
There is an efficient deterministic algorithm that on input $A \in \R^{m \times n}$ with $m \leq n$ and \iid standard Gaussian entries certifies that $\disc(A) > \exp (- O(n^2 / m))$ \whp.
\end{theorem}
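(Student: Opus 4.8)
The plan is to reduce the problem to the Shortest Vector Problem on an explicit integer lattice and to certify a lower bound on that lattice's shortest vector by running the LLL algorithm. The key observation is that after rescaling $A$ to an integer matrix $M := 2^{K}A \in \mathbb{Z}^{m \times n}$ for a suitable $K = \Theta(n^2/m)$ (we read off the bits of $A$; the constant hidden in $K$ is determined in the last step), we have $\disc(A) = 2^{-K}\disc(M)$, and since $Mx \in \mathbb{Z}^m$ for integral $x$, the inequality $\disc(M) \ge 1$ is \emph{equivalent} to the assertion that $Mx \ne 0$ for every $x \in \{\pm 1\}^n$. Thus it suffices to certify, with high probability, that $Mx \ne 0$ for all $x \in \{\pm 1\}^n$; such a certificate establishes $\disc(A) \ge 2^{-K} = \exp(-\Theta(n^2/m))$.

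First I would form the rank-$n$ lattice $\Lambda := \{(N\,Mx,\, x) : x \in \mathbb{Z}^n\} \subseteq \mathbb{Z}^{m+n}$ with basis $\{(N\,Me_i,\, e_i)\}_{i=1}^n$, where $N := 2^n$. A nonzero $x \in \mathbb{Z}^n$ with $Mx \ne 0$ contributes a lattice vector of norm at least $N$, whereas $x$ with $Mx = 0$ contributes one of norm $\norm{x}_2$; hence $\lambda_1(\Lambda) > \sqrt n$ already implies $Mx \ne 0$ for all $x \in \{\pm 1\}^n$ (a bad signing would produce a lattice vector of norm exactly $\sqrt n$). The algorithm runs LLL on this basis, computes $\mu := \min_i \norm{b_i^{\ast}}_2$ from the Gram--Schmidt vectors $b_i^{\ast}$ of the output, and---using the elementary fact that $\lambda_1(\Lambda) \ge \mu$ for \emph{any} basis---outputs the lower bound $2^{-K}$ on $\disc(A)$ if $\mu > \sqrt n$, and $0$ otherwise. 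Soundness is immediate, and the running time is polynomial: $\Lambda$ has rank $n$ and its basis vectors have entries of bit-length $O(K + n) = O(n^2/m)$.

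The substance of the argument is completeness, i.e.\ showing that $\mu > \sqrt n$ with high probability. Here I would also invoke the standard guarantee $\norm{b_i^{\ast}}_2 \ge 2^{-(i-1)/2}\norm{b_1^{\ast}}_2 = 2^{-(i-1)/2}\norm{b_1}_2 \ge 2^{-(i-1)/2}\lambda_1(\Lambda)$, which yields $\mu \ge 2^{-(n-1)/2}\lambda_1(\Lambda)$; so it suffices to prove $\lambda_1(\Lambda) > R := 2^{(n-1)/2}\sqrt n$ with high probability, which by the dichotomy above is equivalent to: with high probability there is no $x \in \mathbb{Z}^n \setminus \{0\}$ with $Mx = 0$ and $\norm{x}_2 \le R$. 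This is a clean first-moment estimate. For a fixed $x \ne 0$, choose $i_0$ with $x_{i_0} \ne 0$; conditioning on the remaining entries of a given row, the event $(Mx)_j = 0$ pins $M_{i_0 j}$ to a single value, which it attains with probability at most $\max_k \Pr[M_{i_0 j} = k] < 2^{-K}$ (since $M_{ij}/2^K$ is Gaussian, of density below $1$, so every atom of $M_{ij}$ has mass below $2^{-K}$); by independence of the $m$ rows, $\Pr[Mx = 0] < 2^{-Km}$. As there are at most $(2R+1)^n = 2^{n^2/2 + O(n \log n)}$ candidate $x$, a union bound gives $\Pr[\exists\text{ such }x] < 2^{n^2/2 + O(n \log n) - Km}$, which is $o(1)$ once $K \ge \frac{n^2}{2m} + O\!\big(\frac{n \log n}{m}\big)$, i.e.\ for a suitable $K = \Theta(n^2/m)$; the same event also forces $\disc(M) \ge 1$, so the certificate is valid with high probability.

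The step I expect to be the conceptual crux is recognizing why the exponent $n^2/m$---rather than the true $\sqrt n\,2^{-n/m}$---is essentially forced by this method: LLL only certifies $\lambda_1(\Lambda) \ge \mu \ge 2^{-\Theta(n)}\lambda_1(\Lambda)$, so to certify $\disc(M) \ge 1$ we must rule out integer solutions of $Mx = 0$ of norm up to $2^{\Theta(n)}\sqrt n$, and killing all $\approx 2^{\Theta(n^2)}$ of them by a union bound demands roughly $n^2/m$ bits of precision---far more than the $\Theta(n/m)$ bits for which $\disc(M) \ge 1$ already holds. The one real technicality is the precision assumption above: if $A$ is only available as $M := \lfloor 2^K A\rceil$, then $\disc(A) = 2^{-K}\disc(M) \pm O(n\,2^{-K})$, so one must instead certify the stronger bound $\disc(M) > n$; I would do this by running the same scheme on the lattice $\{(Mx,\, tx) : x \in \mathbb{Z}^n\}$ for a tiny rational $t = 2^{-\Theta(n)}$, using $\norm{Mx}_\infty \ge \norm{Mx}_2/\sqrt m$ and a parallel (more involved) first-moment bound---ruling out integer $x$ of norm $\le 2^{\Theta(n)}\sqrt n$ that lie within distance $2^{-\Theta(n^2/m)}$ of $\ker M$---which again closes at $K = \Theta(n^2/m)$. (The companion integer-partitioning result is the $m = 1$ instance, with $K$ in the role of the bit-length $b$ and $\disc(M) \ge 1$ being precisely the non-existence of a perfect partition.)
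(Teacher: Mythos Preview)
Your approach is correct in outline and reaches the same $\exp(-O(n^2/m))$ bound, but it is genuinely different from the paper's proof. The paper does \emph{not} use a Lagarias--Odlyzko-style lattice $\{(N\,Mx,\,x):x\in\Z^n\}$. Instead it works directly with the truncation $\bar A$ and the polytope $S_\delta=\{x:\|\bar Ax\|_\infty\le\delta\}\cap[-1,1]^n$, computes an approximate John ellipsoid transformation $T$ with $B(0,1)\subseteq T(S_\delta)\subseteq B(0,2\sqrt n)$, and runs the $\gapsvp$ oracle on $T(\Z^n)$. Soundness is immediate ($x\in\{\pm1\}^n$ with $\|\bar Ax\|_\infty\le\delta$ gives a short vector in $T(\Z^n)$), and completeness is a first-moment bound over $\Z^n\cap[-2\alpha\sqrt n,2\alpha\sqrt n]^n$, very much in the spirit of your union bound. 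What this buys is that the John ellipsoid absorbs the ``$\|Mx\|_\infty\le c$'' constraint into the geometry, so the rounding issue is handled cleanly: the paper simply takes $b=8n^3$ bits of precision (decoupled from the target $\delta=\exp(-6n^2/m)$) and subtracts the tiny correction $un/2^b$. Your construction, by contrast, ties the precision $K$ to the certified value $2^{-K}$, which is why you are forced into the last-paragraph patch.

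That patch is the one place your write-up is genuinely thin. The opening claim ``$M:=2^K A\in\Z^{m\times n}$'' is literally false for Gaussian $A$, and your main argument (certifying $Mx\neq0$) only yields $\disc(A)\ge 2^{-K}-O(n\,2^{-K})$, which is negative. You correctly note that one must instead certify $\disc(M)>n$, but the suggested lattice $\{(Mx,\,tx)\}$ with $t=2^{-\Theta(n)}$ is awkward: shrinking $t$ blows up the box of candidate $x$ to size $2^{\Theta(n^2)}$ with a worse constant, and you then need an anti-concentration bound of the form $\Pr[\|Mx\|_\infty\le R']\le((2R'+1)2^{-K})^m$ rather than the cleaner $\Pr[Mx=0]\le 2^{-Km}$. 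This does close at $K=\Theta(n^2/m)$, but the write-up should actually carry it out; as stated it is a sketch, not a proof. (A simpler variant is to take $t$ of order~$1$ and threshold at $\mu>2n\sqrt m$: soundness gives $\|Mx\|_\infty\ge\|Mx\|_2/\sqrt m>2n$, and the union bound over $\|x\|_\infty\le 2^{n/2}\cdot 2n\sqrt m$ with the anti-concentration above still closes at $K=\Theta(n^2/m)$.) Once the patch is made precise, your more elementary route---no John ellipsoid, explicit integer lattice, direct use of the $\min_i\|b_i^\ast\|$ lower bound from LLL---is a valid alternative proof of the theorem.
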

Theorem~\ref{thm:disc-poly} stands in sharp contrast to the state of affairs for certifying unsatisfiability of random CSPs, for which the SoS SDP hierarchy is believed to be the optimal algorithm. Furthermore, it immediately refutes a conjecture of Bandeira, Kunisky, and Wein~\cite{bandeira2019computational} on the computational hardness of the detection problem in the negatively-spiked Wishart model. In this problem, the goal is to distinguish which of the two following distributions a given matrix $A \in \R^{m \times n}$ was sampled from:

\begin{enumerate}
    \item (\textbf{Null}) The rows of $A$ are \iid samples from $\calN(0,I_n)$.
    \item (\textbf{Planted}) The rows of $A$ are \iid samples from $\calN(0,I_n - \frac{\beta}{n} vv^T)$, where $\beta < 1$ is the signal-to-noise ratio and $v$ is drawn uniformly at random from $\{\pm 1\}^n$.
\end{enumerate}

By taking orthogonal complements, one can verify that this problem is equivalent to detecting whether a random subspace contains a planted Boolean vector. Bandeira et al. showed that low-degree polynomial algorithms fail to distinguish these distributions when $\beta^2 < n/m$ and conjectured that this extends to all polynomial-time algorithms (Conjecture 3.1 of \cite{bandeira2019computational}). It is straightforward to verify that if $A$ is sampled from the planted distribution, then $\disc(A) \leq \polylog (m) \sqrt{n (1-\beta)}$ with high probability. Hence, for some $\beta \geq 1- \exp(-O(n^2/m))$, the algorithm from Theorem~\ref{thm:disc-poly} can distinguish the null and planted distributions, since the discrepancy under the planted distribution is strictly smaller than the discrepancy lower bound certified by the algorithm under the null distribution. However, we emphasize that our algorithm will only succeed for $\beta$ exponentially close to 1, so it is possible that a refined version of the conjecture of Bandeira et al. does still hold. Zadik, Song, Wein, and Bruna~\cite{zadik2022lattice} can formally solve the same problem when $\beta = 1$ and mention that their algorithm also likely works when $\beta$ is exponentially close to 1.

We also mention an interesting phenomenon regarding efficient \emph{non-deterministic} certification. Here, an efficient non-deterministic certification algorithm is one that produces a polynomial \emph{size} witness that exists with high probability for a random matrix $A$, does not exist for any low-discrepancy $A$ and can be verified (but not necessarily computed) in polynomial time. The existence of such an algorithm is an average-case analogue of being in the complexity class $\coNP$. A fascinating result of Feige, Kim and Ofek~\cite{feige2006witnesses} shows the existence of polynomial \emph{size} certificates of the unsatisfiability of random 3SAT formulas on $n$ variables and $m = O(n^{1.4})$ clauses, whereas it is strongly believed that polynomial \emph{time} algorithms for certifying unsatisfiability can only succeed when $m = \Omega(n^{1.5})$. We leave open the possibility of a similar phenomenon occuring in the context of average-case discrepancy.

\begin{theorem}
\label{thm:disc-nondet}
There is an efficient non-deterministic algorithm that on input $A \in \R^{m \times n}$ with $m \leq n$ and \iid standard Gaussian entries certifies that $\disc(A) \geq \exp (- O(n \log (n) / m))$ \whp.
\end{theorem}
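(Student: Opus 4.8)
The plan is to reduce certifying a lower bound on $\disc(A)$ to certifying that a certain integer lattice has no short vector, and to use a (non‑computable but efficiently checkable) reduced basis of that lattice as the witness. Since the algorithm receives $A$ to finite precision, we may scale so that $M := 2^b A$ is an integer matrix, where $b$ is a precision parameter that we will take to be $\Theta(n\log n/m)$; then $\disc(A) = 2^{-b}\disc(M)$, so it is enough to certify a lower bound on $\disc(M)$. Consider the integer kernel lattice $\Lambda_0 := \{z \in \Z^n : Mz = 0\}$, which has rank $n-m$ whenever $M$ has full row rank (true \whp). Since $Mx \in \Z^m$ for every $x \in \{\pm 1\}^n$, we have $\|Mx\|_\infty \geq 1$ unless $x \in \Lambda_0$; and since every $\pm 1$ vector has Euclidean norm exactly $\sqrt n$, it follows that
\[ \lambda_1(\Lambda_0) > \sqrt n \;\Longrightarrow\; \disc(M) \geq 1 \;\Longrightarrow\; \disc(A) \geq 2^{-b} = \exp(-O(n\log n/m)). \]
So the entire problem reduces to certifying $\lambda_1(\Lambda_0) > \sqrt n$. (If the input is given to more than $b$ bits we first truncate it; absorbing the truncation error requires certifying $\disc(M) \gtrsim n$ rather than $\disc(M) \geq 1$, which is still true \whp\ for the same choice of $b$ and needs only a routine strengthening of the argument below, changing the final exponent by a constant factor.)

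The non‑deterministic witness is a basis $B = (b_1,\dots,b_{n-m})$ of $\Lambda_0$. Given $B$ and $A$, the verifier (i) checks that $Mb_i = 0$ for all $i$; (ii) checks that $B$ generates all of $\Lambda_0$, not a proper sublattice, by computing some basis $B_0$ of $\Lambda_0$ directly in polynomial time (e.g.\ from a Hermite normal form of $M$) and verifying that the integer change‑of‑basis matrix between $B$ and $B_0$ has determinant $\pm 1$; and (iii) computes the Gram--Schmidt vectors $b_1^*,\dots,b_{n-m}^*$ and checks that $\min_i \|b_i^*\| > \sqrt n$. If all checks pass then $\lambda_1(\Lambda_0) \geq \min_i \|b_i^*\| > \sqrt n$ (the first inequality holds for any basis), which certifies the desired bound on $\disc(A)$. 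This is sound for \emph{every} $A$, since no randomness enters the soundness argument; it therefore remains to exhibit such a $B$ for a random $A$.

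Take $B$ to be an HKZ‑reduced basis of $\Lambda_0$. By the Lagarias--Lenstra--Schnorr bounds one has $\min_i\|b_i^*\| = \Omega(\lambda_1(\Lambda_0)/\sqrt{n-m})$, so it suffices to prove $\lambda_1(\Lambda_0) = \Omega(n)$; in fact I expect to prove much more. Heuristically $\lambda_1(\Lambda_0) \approx \sqrt{n-m}\cdot\mathrm{covol}(\Lambda_0)^{1/(n-m)}$, and $\mathrm{covol}(\Lambda_0) = \sqrt{\det(MM^T)}\asymp 2^{bm}n^{m/2}$ \whp\ (the determinant factor because the singular values of $A$ are $\Theta(\sqrt n)$ \whp, and no denominator because the $m\times m$ minors of $M$ are coprime \whp), which predicts $\lambda_1(\Lambda_0)\approx\sqrt n\cdot 2^{bm/(n-m)}$. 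To make the direction we need rigorous, I would run a first‑moment bound: for a fixed primitive $z \in \Z^n$ the rows of $M$ are independent, and an anticoncentration (local limit) estimate gives $\Pr[\langle M_{i\cdot},z\rangle = 0] \lesssim (2^b\|z\|_2)^{-1}$, hence $\Pr[Mz = 0]\lesssim(2^b\|z\|_2)^{-m}$; summing this over all primitive $z$ with $0 < \|z\|_2 \leq R$ --- of which there are at most $(O(1+R/\sqrt n))^n$ --- shows that \whp\ $\Lambda_0$ has no nonzero vector of norm $\leq R := \Theta(\sqrt n)\cdot 2^{bm/(n-m)}$. Taking $b$ to be a large enough constant multiple of $n\log n/m$ makes $R > \sqrt n\cdot\mathrm{poly}(n)$, more than enough for the witness to certify $\lambda_1(\Lambda_0) > \sqrt n$.

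The hard part is this last probabilistic estimate --- in particular making the anticoncentration and the union bound sharp enough that $b = \Theta(n\log n/m)$ already suffices (rather than, say, $\Theta(n^2/m)$) --- together with the accompanying coprimality claim needed to pin down $\mathrm{covol}(\Lambda_0)$; the rest is bookkeeping. Finally, the deterministic algorithm of Theorem~\ref{thm:disc-poly} follows the identical pipeline except that instead of guessing an HKZ‑reduced basis it computes an LLL‑reduced basis of $\Lambda_0$, for which one only has $\min_i\|b_i^*\| \geq \lambda_1(\Lambda_0)/2^{(n-m)/2}$; certifying $\lambda_1(\Lambda_0) > \sqrt n$ this way forces $2^{bm/(n-m)} > 2^{\Theta(n)}$, i.e.\ $b = \Theta(n^2/m)$, and this $2^{\Theta(n)}$‑versus‑$\mathrm{poly}(n)$ gap in the achievable lattice‑reduction quality is exactly the source of the exponential improvement in Theorem~\ref{thm:disc-nondet}.
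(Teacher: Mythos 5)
Your overall strategy --- reduce certifying a discrepancy lower bound to certifying that a lattice has no short vector, and use the fact that ``no short vector'' has succinct, efficiently verifiable witnesses with only a $\poly(n)$ gap (versus the $2^{\Theta(n)}$ gap of LLL) --- is the same idea the paper uses; the paper runs its Algorithm~\ref{alg:cert} on the lattice $T(\Z^n)$ obtained by applying a John-ellipsoid transformation to the slab $S_\delta=\{x:\norm{\Bar{A}x}_\infty\le\delta\}\cap[-1,1]^n$ and simply swaps in the Aharonov--Regev $\gapsvp_{c\sqrt{k}}$ certificate (Theorem~\ref{theorem:svp-conp}) for LLL. However, your instantiation has two genuine gaps. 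First, the witness-existence step rests on the claim that an HKZ-reduced basis satisfies $\min_i\norm{b_i^*}=\Omega(\lambda_1/\sqrt{n-m})$, attributed to Lagarias--Lenstra--Schnorr; their bounds relate $\norm{b_i}$ to $\lambda_i$, not the Gram--Schmidt norms, and the known worst-case guarantee for HKZ Gram--Schmidt norms loses a quasipolynomial factor $k^{\Theta(\log k)}$, which fed through your parameters yields only $\exp(-O(n\log^2 n/m))$, weaker than the theorem. This is repairable --- e.g.\ take the reversed dual of a reduced basis of the dual lattice and invoke a transference theorem to get $\min_i\norm{b_i^*}\ge\lambda_1/\poly(n)$, or simply use the known $\gapsvp\in\coNP$ certificates with gap $O(\sqrt{k})$ as the paper does --- but as written the key existence claim is unsupported.

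Second, the truncation issue is not the ``routine strengthening'' you describe. Your certificate mechanism (ruling out nonzero vectors of the kernel lattice $\Lambda_0=\{z\in\Z^n: Mz=0\}$) can only ever certify $\disc(M)\ge 1$, since any nonzero integer vector trivially has $\ell_\infty$-norm at least $1$; it cannot certify $\disc(M)\gtrsim n$, which is exactly what you need to absorb the $O(un/2^b)$ error between the truncated matrix and the true Gaussian $A$. Fixing this forces a different lattice that encodes the near-kernel condition $\norm{Mz}_\infty\le T$ for $T\gg 1$ (e.g.\ appending scaled coordinates $Mz/T$, or the paper's slab-plus-John-ellipsoid construction), together with the corresponding change in what the witness certifies; the probabilistic estimate is not the only thing that must be strengthened. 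Relatedly, the first-moment estimate you defer as ``the hard part'' is misstated: for truncated rows the per-row probability of $\ip{M_i}{z}=0$ is governed by an interval of length $\approx\norm{z}_1 2^{-b}$, giving roughly $\sqrt{n}\,2^{-b}$ rather than $(2^b\norm{z}_2)^{-1}$; the corrected bound still gives $\lambda_1 \ge \poly(n)$ \whp at $b=\Theta(n\log n/m)$, and this is precisely what the paper proves in Lemmas~\ref{lemma:cert-value} and~\ref{lemma:anti-conc}, but in your write-up it remains an unproven (and incorrectly stated) step.
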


While we do not prove any algorithmic hardness results in this paper, the previous two theorems raise the possibility of a regime of parameters in which there are succinct certificates of discrepancy lower bounds, yet there are no efficient algorithms to \emph{find} these certificates. In the language of Feige, Kim and Ofek, this means that for the average-case complexity of this discrepancy problem, ``$\coNP \subseteq \mathsf{P}$'' for $\delta > \exp (-O(n^2/m))$ and ``$\coNP \subseteq \NP$ ''for $\delta > \exp(-O(n \log(n)/m))$.

We now turn our attention to the \emph{integer partitioning problem}, a generalization of one of the six original $\NP$-complete problems of Garey and Johnson~\cite{garey1979computers}, for which a similar story takes place. Given $n$ uniformly random $b$-bit integers $a_1, \ldots, a_n$, the integer partitioning problem asks to find a perfect partition, i.e. a subset $S \subseteq [n]$ such that 
\[
\left| \sum_{i \in S} a_i - \sum_{i \notin S} a_i \right| \leq 1.
\]
This is nothing but an average-case discrepancy problem in disguise; rescaling $\Bar{A} = a/2^b$ (to be thought of as a vector of $b$-bit truncations of $\Unif([0,1])$ random variables), a perfect partition exists if and only if $\disc(\Bar{A}) \leq 2^{-b}$. The integer partitioning problem has been studied thoroughly in both the computer science~\cite{karmarkar1982differencing,gent1996phase,korf1998complete} and statistical physics~\cite{mertens1998phase,mertens2000random,borgs2001phase,bauke2004number,bauke2004universality,borgs2009proofi,borgs2009proofii} communities and was among the first average-case combinatorial optimization problems for which phase transition behavior was fully characterized. Under the scaling $b = \alpha n$, Borgs, Chayes, and Pittel~\cite{borgs2001phase} showed that the probability of existence of a perfect partition undergoes a phase transition: when $\alpha < 1$ a perfect partition exists \whp and when $\alpha > 1$, no perfect partition exists \whp. Motivated by the previous discussion, we ask: what is the smallest value of $\alpha > 1$ for which there is an efficient algorithm that certifies the absence of perfect partitions? To the best of our knowledge, this question has not been studied before. We give analogues of Theorems~\ref{thm:disc-poly} and \ref{thm:disc-nondet}, the first non-trivial certification guarantees for integer partitioning.

\begin{theorem}
\label{thm:ip-poly}
There is an efficient deterministic algorithm that on input $a_1, \ldots, a_n \in \{0, \ldots, 2^{b} - 1\}$ drawn \iid uniformly at random certifies that no perfect partition of $a_1, \ldots a_n$ exists \whp when $b = \alpha n$ for some $\alpha = O (n)$.
\end{theorem}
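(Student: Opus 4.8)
The plan is to reduce the non-existence of a perfect partition to a lower bound on the shortest nonzero vector of an integer lattice built from the input, and then to certify that lower bound efficiently using the LLL algorithm. Given $a_1, \dots, a_n$, set $a' = (a_1, \dots, a_n, 1) \in \mathbb{Z}^{n+1}$ and let $\Lambda = \{v \in \mathbb{Z}^{n+1} : \langle a', v\rangle = 0\}$, a lattice of rank exactly $n$ (since $a' \neq 0$) that admits an efficiently computable $\mathbb{Z}$-basis with polynomially bounded bit-length, e.g.\ via Hermite normal form. The key observation is that if a perfect partition exists, i.e.\ some $x \in \{\pm 1\}^n$ has $\langle a, x\rangle = \varepsilon \in \{-1,0,1\}$, then $(x_1, \dots, x_n, -\varepsilon) \in \Lambda$ has Euclidean norm at most $\sqrt{n+1}$, so $\lambda_1(\Lambda) \le \sqrt{n+1}$. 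Contrapositively, it suffices to certify $\lambda_1(\Lambda) > \sqrt{n+1}$.

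First I would run LLL on a basis of $\Lambda$; since $b = \alpha n$ with $\alpha = O(n)$, the input has $\mathrm{poly}(n)$ bits and LLL runs in $\mathrm{poly}(n)$ time, returning a nonzero $w \in \Lambda$ with $\|w\|_2 \le 2^{n/2}\,\lambda_1(\Lambda)$. The algorithm outputs the certificate ``no perfect partition'' precisely when $\|w\|_2 > 2^{n/2}\sqrt{n+1}$; by the previous paragraph this is sound for \emph{every} input. It then remains to show that whp over the random $a$ one has $\lambda_1(\Lambda) > R := 2^{n/2}\sqrt{n+1}$. For this I would take a union bound over the at most $(2R+1)^{n+1} = 2^{O(n^2)}$ nonzero integer vectors $v$ with $\|v\|_2 \le R$. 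For any such $v$, if its first $n$ coordinates all vanished then $\langle a', v\rangle = v_{n+1}$ would force $v = 0$; so some $v_i \neq 0$ with $i \le n$, and conditioning on the other coordinates of $a$ pins $a_i$ to at most one admissible value, giving $\Pr_a[\langle a', v\rangle = 0] \le 2^{-b}$. Hence $\Pr[\lambda_1(\Lambda) \le R] \le 2^{O(n^2)} \cdot 2^{-\alpha n}$, which is $o(1)$ once $\alpha n$ exceeds the exponent from the union bound, i.e.\ for a suitable $\alpha = O(n)$ (in fact $\alpha = n/2 + O(\log n)$ suffices, since $\log_2 R = n/2 + O(\log n)$).

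I expect the only genuine work to be in the union-bound bookkeeping: tracking the constant in the $O(n^2)$ exponent (which is $(n+1)\log_2(2R+1)$, hence essentially $n^2/2$) to pin down the constant in $\alpha = O(n)$, and checking that the ``freeze all but one coordinate'' anti-concentration estimate $\Pr_a[\langle a', v\rangle = 0] \le 2^{-b}$ really does hold uniformly over all relevant $v$, including those with large entries where $\langle a', v\rangle$ ranges over a wide interval — the point being that divisibility and range constraints only decrease the probability. A minor additional point is to verify the efficient computability and polynomial bit-size of a basis of $\Lambda$, so that LLL's running time and approximation guarantee on integer input apply as claimed. The same template, using the rank-$(n-1)$ lattice $\{v \in \mathbb{Z}^n : \langle a, v\rangle = 0\}$ in place of $\Lambda$, handles the generic case (e.g.\ $\sum_i a_i$ even) in which a perfect partition forces $\langle a, x\rangle = 0$ exactly, and gives a slightly cleaner bound.
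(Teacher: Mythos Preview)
Your argument is correct, but it takes a genuinely different route from the paper's. You work directly with the orthogonal integer lattice $\Lambda=\{v\in\Z^{n+1}:\langle a',v\rangle=0\}$ (in the spirit of the Lagarias--Odlyzko attack on low-density subset sum), observe that a perfect partition forces a short vector in $\Lambda$, and certify its absence via LLL plus a first-moment bound over integer points in a ball of radius $2^{n/2}\sqrt{n+1}$. The paper instead rescales to $\bar A=a/2^b$, recasts the problem as certifying $\disc(\bar A)>2^{-b}$, and invokes its general Algorithm~1: compute an approximate John ellipsoid of the slab-and-box polytope $S_\delta$, transform $\Z^n$ by the resulting map $T$, and run LLL on $T(\Z^n)$; the high-probability analysis then goes through the general Lemma~3 with $\mathcal D=\mathrm{Unif}([0,1])$, $m=1$, $M=1$.

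What each approach buys: yours is more elementary for this specific integer problem---no John ellipsoid, no real-valued truncation bookkeeping---and even yields a slightly smaller constant (roughly $\alpha\approx n/2$ versus the paper's $\alpha\approx n$ coming from the factor of~2 in condition~(3) of Lemma~3). The paper's construction is less direct here but is a single template that simultaneously handles the Gaussian discrepancy theorems and the integer partitioning theorems with one algorithm and one lemma; your orthogonal-lattice reduction does not obviously extend to the Gaussian $m\times n$ setting, where there is no exact linear relation to exploit.
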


\begin{theorem}
\label{thm:ip-nondet}
There is an efficient non-deterministic algorithm that on input $a_1, \ldots, a_n \in \{0, \ldots, 2^{b} - 1\}$ drawn \iid uniformly at random for $b = \alpha n$ certifies that no perfect partition of $a_1, \ldots a_n$ exists \whp for some $\alpha = O (\log (n))$.
\end{theorem}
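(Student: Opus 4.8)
The plan is to reduce the non-existence of a perfect partition to a lower bound on the length $\lambda_1(\Lambda)$ of the shortest vector of an associated integer lattice $\Lambda$, and then to observe that such a lower bound has a short, efficiently checkable non-deterministic certificate: a basis of $\Lambda$ whose Gram--Schmidt vectors are all long. This is the same reduction underlying the deterministic algorithm of Theorem~\ref{thm:ip-poly}, with one change. There one can only afford the basis produced by LLL, whose Gram--Schmidt profile may degrade by a factor $2^{\Theta(n)}$ relative to $\lambda_1(\Lambda)$; here the (computationally unbounded) prover may instead supply a basis coming from an \emph{exact} shortest-vector computation, e.g.\ a Korkine--Zolotarev/HKZ-reduced basis, for which the degradation is only $\mathrm{poly}(n)$. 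This $2^{\Theta(n)}$ versus $\mathrm{poly}(n)$ gap is exactly what improves $\alpha = O(n)$ to $\alpha = O(\log n)$.

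Concretely, a perfect partition of $a = (a_1,\dots,a_n)$ exists if and only if there is $x \in \{\pm 1\}^n$ with $\langle a, x\rangle \in \{-1,0,1\}$. Consider the rank-$n$ lattice $\Lambda = \Lambda(a) := \{\,(z,\langle a,z\rangle) : z \in \mathbb{Z}^n\,\} \subseteq \mathbb{Z}^{n+1}$, which has determinant $\sqrt{1+\|a\|_2^2}$. If a perfect partition exists, the vector $(x,\langle a,x\rangle) \in \Lambda$ is nonzero with Euclidean norm $\sqrt{n + \langle a,x\rangle^2} \le \sqrt{n+1}$; equivalently, $\lambda_1(\Lambda) > \sqrt{n+1}$ implies no perfect partition exists. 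The reason for augmenting $z$ with the coordinate $\langle a, z\rangle$, rather than (say) homogenizing against the target $\lfloor \sigma/2 \rfloor$ where $\sigma = \sum_i a_i$, is that $\Lambda$ carries no ``spurious'' short vectors unrelated to perfect partitions: a short vector of $\Lambda$ forces both $\|z\|_2$ small and $|\langle a,z\rangle|$ small, and for a fixed nonzero $z$ the latter event has probability at most $\mathrm{poly}(n)\cdot 2^{-b}$ over random $a$. (A naive homogenization against $\lfloor \sigma/2\rfloor$ instead admits a short vector built from the all-ones vector whenever $\sigma$ is even, and so cannot be certified this way.)

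The non-deterministic witness is a basis $b_1,\dots,b_n$ of $\Lambda$ (together with its implicit Gram--Schmidt orthogonalization $b_1^\ast,\dots,b_n^\ast$). The verifier checks that each $b_i$ lies in $\Lambda$, i.e.\ $b_i \in \mathbb{Z}^{n+1}$ and the last coordinate of $b_i$ equals $\langle a, \cdot\rangle$ applied to its first $n$ coordinates; that the $b_i$ are linearly independent; that $\prod_i \|b_i^\ast\|_2 = \sqrt{1+\|a\|_2^2}$, certifying that the $b_i$ generate all of $\Lambda$ rather than a proper sublattice; and finally that $\|b_i^\ast\|_2 > \sqrt{n+1}$ for all $i$. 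If every check passes it outputs ``no perfect partition''. This is sound for \emph{every} input, since $\lambda_1(\Lambda) \ge \min_i \|b_i^\ast\|_2 > \sqrt{n+1}$; all checks run in time polynomial in the input, and the witness has polynomial bit-length. Note the verifier never checks HKZ-reducedness, only the Gram--Schmidt lengths.

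It remains to show that for $b = \alpha n$ with $\alpha = \Theta(\log n)$ (e.g.\ $\alpha = \log_2 n$), such a witness exists \whp. First, a first-moment/union-bound argument gives $\lambda_1(\Lambda) \ge \kappa \sqrt{n}\, 2^{\alpha}$ \whp\ for a constant $\kappa>0$: the number of $z \in \mathbb{Z}^n$ with $\|z\|_2 \le R := \kappa\sqrt n\,2^\alpha$ is at most $(1+o(1))(2\pi e R^2/n)^{n/2} = (1+o(1))(2\pi e\kappa^2\,2^{2\alpha})^{n/2}$, each nonzero such $z$ satisfies $\Pr_a[\,|\langle a,z\rangle| \le R\,] \le (2R+1)2^{-b}$ since $\langle a,z\rangle$ hits any fixed value with probability at most $2^{-b}$, and the resulting product is $(2\pi e \kappa^2)^{n/2}\cdot\mathrm{poly}(n) \to 0$ once $\kappa < 1/\sqrt{2\pi e}$. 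Second, the HKZ-reduced basis of $\Lambda$ (which always exists) satisfies the standard transference bound $\|b_i^\ast\|_2 \ge \tfrac{2}{\sqrt{n+3}}\,\lambda_1(\Lambda)$ for every $i$, so \whp\ $\|b_i^\ast\|_2 \gtrsim \kappa\,2^{\alpha}$, which exceeds $\sqrt{n+1}$ as soon as $\alpha > \tfrac12 \log_2 n + O(1)$; hence the HKZ-reduced basis is an accepting witness. The main obstacle in executing this plan is the first step — establishing the \whp\ lower bound on $\lambda_1(\Lambda)$ with a constant strong enough to survive the $\mathrm{poly}(n)$ loss in the transference bound — which is also where the particular choice of lattice is essential, for the reason noted above.
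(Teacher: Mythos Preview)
Your argument is correct, but it is not the route the paper takes. The paper proves Theorem~\ref{thm:ip-nondet} by reusing Algorithm~\ref{alg:cert} verbatim and simply replacing the LLL-based $\gapsvp_{2^{n/2}}$ oracle by the Aharonov--Regev $\gapsvp_{c\sqrt{n}}$ coNP certificate of Theorem~\ref{theorem:svp-conp}; plugging $\alpha = c\sqrt n$ into Lemma~\ref{lemma:cert-value} (with $m=1$, $\calD=\Unif([0,1])$, $M=1$) immediately yields the condition $b \ge C n\log n$, i.e.\ $\alpha = O(\log n)$. In particular the paper keeps the John--ellipsoid step and works with the transformed lattice $T(\Z^n)$.

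You instead bypass the John ellipsoid entirely by embedding the problem into the explicit rank-$n$ lattice $\Lambda = \{(z,\langle a,z\rangle):z\in\Z^n\}$, and you make the coNP certificate concrete: an HKZ-reduced basis, verified only through its Gram--Schmidt lengths and the determinant identity $\prod_i\|b_i^\ast\|=\sqrt{1+\|a\|^2}$. Your first-moment bound on $\lambda_1(\Lambda)$ and the Lagarias--Lenstra--Schnorr inequality $\|b_i^\ast\|\ge \tfrac{2}{\sqrt{i+3}}\lambda_i(\Lambda)$ are both standard and correctly applied. The trade-off is clear: your argument is more elementary and self-contained (no John ellipsoid, no black-box appeal to Aharonov--Regev), but it is tailored to $m=1$; the paper's approach is modular and handles all $m\le n$ uniformly, which is why the same machinery also yields Theorem~\ref{thm:disc-nondet}.
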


\subsection{Techniques}
At a high level, our algorithms reduce the problem of certifying lower bounds on discrepancy to the problem of deciding whether a certain lattice contains a short vector. To approximately solve this instance of the shortest vector problem (SVP) in polynomial time, we invoke the Lenstra–Lenstra–Lovász (LLL) algorithm. In fact, our result allows one to translate, in a black-box way, the approximation guarantee of any given SVP oracle $\calO$ to the discrepancy lower bound cerified by our algorithm instantiated with $\calO$. 

We also remark that lattice basis reduction techniques have recently been used to solve \emph{search} versions of various average-case problems exhibiting conjectural statistical-to-computational gaps~\cite{gamarnik2021inference,song2021cryptographic,diakonikolas2022non,zadik2022lattice}. These works are not directly comparable to the present paper for two reasons. First, they study \emph{search} problems, whereas we study \emph{certification}; in general, there is no formal connection between the two and in some cases, their complexities can be quite different (see~\cite{bandeira2019computational} for a notable example). Second, while they do not directly apply to the negatively-spiked Wishart model, they can solve a greater variety of problems, such as non-Gaussian component analysis~\cite{diakonikolas2022non}, clustering Gaussian mixtures~\cite{zadik2022lattice} and various other noiseless inference problems~\cite{gamarnik2021inference,song2021cryptographic}. Our results complement this line of work by demonstrating the utility of lattice-based techniques for solving \emph{certification} problems as well. For both search and certification problems, lattice basis reduction techniques break computational barriers that apply to other classes of algorithms like low-degree polynomials and SoS.

\subsection{Future work}
In this work, we gave the first non-trivial algorithms for two fundamental average-case certification problems. We bring to bear a novel algorithmic technique for the certification problem that outperforms standard convex relaxation techniques. While we focused on Gaussian and integer input settings for simplicity, we believe it is straightforward to extend our results to a broader class of distributions satisfying mild concentration and anti-concentration properties. 

Our results leave open the possibility of a statistical-to-computational gap for certifying discrepancy lower bounds, mirroring the scenario for random CSPs. An important direction for future research is to either design algorithms which improve on those in this paper or provide rigorous evidence for hardness of average-case certification of discrepancy lower bounds. This is a particularly challenging task because it is currently unclear whether such gaps can be predicted by analyzing a restricted class of algorithms. We have no reason to believe that the algorithms in this paper are optimal; any improvement on the value certified in, say, Theorem~\ref{thm:disc-poly} would be very interesting. Again taking inspiration from the study of certifying unsatisfiability of random CSPs~\cite{raghavendra2017strongly}, we ask: can one design a \emph{sub-exponential} time algorithm that certifies a better value than the value given in Theorem~\ref{thm:disc-poly}?

We conclude by mentioning another related open problem regarding the discrepancy of \emph{Bernoulli} matrices. It is known that  the probability that an $m \times n$ matrix $A$ with \iid $\mathsf{Bernoulli(1/2)}$ entries will have discrepancy at most 1 undergoes a phase transition from 0 to 1 at $n = \Theta (m \log m)$~\cite{potukuchi2018discrepancy} and that it will have discrepancy $\Theta(\sqrt{n})$ with high probability when $n = \Theta(m)$. Altschuler and Niles-Weed~\cite{altschuler2022discrepancy} conjecture that no efficient algorithm can even find a constant discrepancy signing in the regime $n \geq Cm \log m$. We pose the following certification problem: what is the largest $n$ for which there is an efficient algorithm that certifies the discrepancy of a random binary matrix is strictly bigger than 1 with high probability? Unfortunately, the algorithms in this paper do not apply to this Bernoulli model.

\section{Preliminaries}
\subsection{Computational model}
We now specify the details of the computational model in which our algorithms operate. Let $A \in \R^{m \times n}$ be a matrix whose entries are \iid according to some distribution $\calD$ on $\R$ and $b \in \N$ be a truncation parameter. The algorithm receives as input the matrix $\Bar{A} \in \R^{m \times n}$ whose $(i,j)$ entry is $A_{ij}$ truncated to $b$ bits of precision, for every $i \in [m], j \in [n]$. We say that an algorithm $\ALG$ certifies a discrepancy lower bound of $\delta = \delta(m,n,b)$ on $\Bar{A}$ if:
\begin{itemize}
    \item For every input $\Bar{A}$, $\ALG$ outputs a value $\ALG (\Bar{A})$ such that $\ALG (\Bar{A}) \leq \disc (\Bar{A})$.
    \item For random input $\Bar{A}$ generated as described above, $\ALG (\Bar{A}) \geq \delta$ with high probability.
\end{itemize}
In the Gaussian setting (i.e. $\calD = \calN (0,1)$), our algorithm works in a model in which it can query the $b$ most significant bits in the binary representation of any entry of $A$ at $O(b)$ computational cost, for any $b$. 

For the integer partitioning problem (i.e. $\calD = \Unif ([0,1])$), the algorithm is simply given as input the $b$-bit representations of the numbers $a_1, \ldots, a_n$ for some value of $b$ that it cannot choose. Furthermore, we say that an algorithm certifies the non-existence of a perfect partition if for every instance $a_1, \ldots, a_n$, the algorithm never reports that no perfect partition exists if one does exist.

\subsection{SVP}
Given a collection of linearly independent vectors $\calB = \{b_1, \ldots, b_k\} \subset \R^d$, the \emph{lattice} $\calL = \calL(\calB)$ generated by basis vectors in $\calB$ is defined as 
\[
\calL = \left \{\sum_{i=1}^k x_i b_i: x \in \Z^k \right \}.
\]
For any lattice $\calL$, we can define the length of its shortezt non-zero vector as
\[
\lambda_1(\calL) = \min_{x \in \calL \setminus \{0\}} \norm{x}_2.
\]
The $\gapsvp_\alpha$ problem is to distinguish, given an input lattice $\calL$ (described by its basis) and parameter $\alpha \geq 1$, whether $\lambda_1(\calL) \leq 1$ or $\lambda_1(\calL) > \alpha$, under the promise that $\calL$ satisfies exactly one of these two conditions. The main algorithm in this work requires an oracle for the $\gapsvp_\alpha$ problem; we now state the guarantees of two algorithms for SVP.

\begin{theorem}[\cite{lenstra1982factoring}]
\label{theorem:lll}
There is a deterministic algorithm that given input collection $\calB = \{b_1, \ldots, b_k \} \subset \Q^d$ of linearly independent vectors with bit complexity $b$ solves the $\gapsvp_\alpha$ problem on instance $\calL(\calB)$ for $\alpha = 2^{k/2}$ in time $\poly(k,d,b)$.
\end{theorem}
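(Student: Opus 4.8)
The plan is to analyze the Lenstra--Lenstra--Lov\'asz basis reduction algorithm and then reduce $\gapsvp_{2^{k/2}}$ to computing the first vector of an LLL-reduced basis. First I would recall the notion of an LLL-reduced basis. Given $\calB = \{b_1, \ldots, b_k\}$, compute the Gram--Schmidt orthogonalization $b_1^*, \ldots, b_k^*$ together with coefficients $\mu_{ij} = \langle b_i, b_j^* \rangle / \langle b_j^*, b_j^* \rangle$ for $j < i$. Call the basis reduced (with parameter $\delta = 3/4$) if $|\mu_{ij}| \leq 1/2$ for all $j < i$ (size-reducedness) and $\norm{b_{i+1}^* + \mu_{i+1,i} b_i^*}_2^2 \geq \tfrac34 \norm{b_i^*}_2^2$ for all $i$ (the Lov\'asz condition). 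The algorithm alternates two operations that do not change the lattice $\calL(\calB)$: a size-reduction step that replaces $b_i$ by $b_i - \lfloor \mu_{ij} \rceil b_j$ (leaving the $b_\ell^*$ unchanged), and a swap step that exchanges $b_i$ and $b_{i+1}$ whenever the Lov\'asz condition fails at index $i$.

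Second I would establish the two quantitative consequences of reducedness. For termination and running time, track the potential $D = \prod_{i=1}^{k-1} \det(G_i)$, where $G_i$ is the Gram matrix of $b_1, \ldots, b_i$, equivalently $D = \prod_{i=1}^{k-1} \prod_{j \leq i} \norm{b_j^*}_2^2$. Size reduction does not change $D$; each swap multiplies $D$ by a factor at most $3/4$ (this is exactly where the Lov\'asz constant enters). Since, after clearing denominators, $D$ is a product of determinants of positive-definite integer matrices, it is bounded below and above by quantities depending only on $k, d$ and the input bit complexity $b$; hence the number of swaps, and therefore the number of iterations, is $\poly(k, d, b)$. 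For the approximation guarantee, size-reducedness together with the Lov\'asz condition gives $\norm{b_i^*}_2^2 \geq \tfrac12 \norm{b_{i-1}^*}_2^2$, hence $\norm{b_1}_2^2 = \norm{b_1^*}_2^2 \leq 2^{i-1} \norm{b_i^*}_2^2$ for every $i$; combined with the standard bound $\lambda_1(\calL)^2 \geq \min_i \norm{b_i^*}_2^2$, this yields $\norm{b_1}_2 \leq 2^{(k-1)/2} \lambda_1(\calL)$.

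Finally I would perform the reduction to $\gapsvp_{2^{k/2}}$: run LLL, and report ``$\lambda_1(\calL) \leq 1$'' if $\norm{b_1}_2 \leq 2^{(k-1)/2}$ and ``$\lambda_1(\calL) > 2^{k/2}$'' otherwise. If $\lambda_1(\calL) \leq 1$, then $\norm{b_1}_2 \leq 2^{(k-1)/2}$, so the first case triggers; if $\lambda_1(\calL) > 2^{k/2}$, then $\norm{b_1}_2 \geq \lambda_1(\calL) > 2^{k/2} > 2^{(k-1)/2}$, so the second case triggers. Correctness on the $\gapsvp$ promise follows.

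I expect the main obstacle in a fully rigorous treatment to be the bit-complexity bookkeeping rather than the clean geometric inequalities: one must show that maintaining exact rational Gram--Schmidt data (or a carefully chosen rational surrogate, as in the original argument) keeps all numerators and denominators of the $b_i$ and $\mu_{ij}$ polynomially bounded throughout the $\poly(k,d,b)$ iterations, so that every arithmetic operation costs $\poly(k,d,b)$ time and the overall running time is genuinely polynomial. This is the technically delicate part of the analysis; the remaining steps are routine linear algebra and the potential-function argument sketched above.
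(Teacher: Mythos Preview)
Your sketch is the standard LLL analysis and is correct; however, the paper does not supply its own proof of this theorem. It is stated as a cited result from \cite{lenstra1982factoring} and used as a black box, so there is no ``paper's proof'' to compare against. Your outline (Gram--Schmidt, size-reduction plus Lov\'asz swaps, the potential $D$ for termination, the chain $\norm{b_1^*}_2^2 \leq 2^{i-1}\norm{b_i^*}_2^2$ combined with $\lambda_1(\calL) \geq \min_i \norm{b_i^*}_2$ for the $2^{(k-1)/2}$ approximation, and the obvious thresholding for $\gapsvp_{2^{k/2}}$) is exactly the classical argument, and your identification of the bit-complexity bookkeeping as the only genuinely delicate part is accurate.
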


\begin{theorem}[\cite{aharonov2005lattice}]
\label{theorem:svp-conp}
There is some constant $c > 0$ such that for any instance $\calL(\calB)$, described by a collection $\calB = \{b_1, \ldots, b_k \} \subset \Q^d$ of linearly independent vectors with bit complexity $b$, of $\gapsvp_\alpha$ with $\alpha = c \sqrt{k}$, there is a non-deterministic algorithm that produces a $\poly(k,d,b)$-time verifiable certificate of either $\lambda_1(\calL (\calB)) \leq 1$ or $\lambda_1(\calL (\calB)) > \alpha$.
\end{theorem}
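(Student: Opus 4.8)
The plan is to reduce the problem of certifying a lower bound on $\lambda_1(\calL)$ to the analogous task for the distance of a point to a lattice, and then to place the latter in $\NP \cap \coNP$ by exhibiting, in each promise case, a short witness checkable in polynomial time. The key object is a ``periodic Gaussian'' potential: writing $\rho(S) = \sum_{y \in S} e^{-\pi \norm{y}_2^2}$ and letting $D_{\calL^*}$ be the width-$1$ discrete Gaussian on the dual lattice $\calL^*$, Poisson summation gives
\[
f_\calL(t) \;:=\; \frac{\rho(\calL + t)}{\rho(\calL)} \;=\; \mathbb{E}_{w \sim D_{\calL^*}}[\cos(2\pi\langle w, t\rangle)] \;\in\; (0,1].
\]
Thus the single real number $f_\calL(t)$ can be approximated by an empirical average over finitely many \emph{dual}-lattice vectors. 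The ``easy'' promise case --- $\lambda_1(\calL) \le 1$, or $\mathrm{dist}(t,\calL) \le 1$ after the reduction --- is certified by simply exhibiting the short (or nearby) lattice vector, which the verifier checks with a rational linear-system solve and a norm computation; this settles the $\NP$ side, so all of the work is on the $\coNP$ side.

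The two analytic estimates I would establish about $f_\calL$ are: (i) if $\mathrm{dist}(t,\calL) \ge c_0 \sqrt{k}$ for a suitable absolute constant $c_0$, then $f_\calL(t) \le 2^{-\Omega(k)}$, which is Banaszczyk's transference-type bound on the Gaussian mass of a shifted lattice; and (ii) if $\mathrm{dist}(t,\calL) \le \epsilon_0$ for a suitable absolute constant $\epsilon_0$, then not only $f_\calL(t) \ge 0.9$ but the \emph{robust} statement $\frac{1}{N}\sum_{i=1}^N \cos(2\pi\langle w_i, t\rangle) \ge 0.9$ holds for \emph{any} $w_1,\dots,w_N \in \calL^*$ whose empirical second-moment matrix $M = \frac{1}{N}\sum_i w_i w_i^T$ obeys $\norm{M}_{\mathrm{op}} \le C$. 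For (ii), write $t = v_0 + e$ with $v_0 \in \calL$ and $\norm{e}_2 \le \epsilon_0$; since $\langle w_i, v_0 \rangle \in \Z$ we have $\cos(2\pi\langle w_i, t\rangle) = \cos(2\pi\langle w_i, e\rangle) \ge 1 - 2\pi^2 \langle w_i, e\rangle^2$, and averaging gives a lower bound of $1 - 2\pi^2 e^T M e \ge 1 - 2\pi^2 C \epsilon_0^2 > 0.9$ once $\epsilon_0$ is a small enough constant. Rescaling the lattice so the near/far thresholds become $\epsilon_0$ and $c_0\sqrt{k}$, the separation becomes $\alpha = (c_0/\epsilon_0)\sqrt{k} = \Theta(\sqrt{k})$.

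The $\coNP$ certificate of $\mathrm{dist}(t,\calL) > \alpha$ is a list $w_1,\dots,w_N \in \calL^*$ with $N = \poly(k)$; the verifier (a) checks each $w_i \in \calL^*$, (b) checks $\norm{M}_{\mathrm{op}} \le C$ (a polynomial-time semidefiniteness test over $\Q$), and (c) computes $\frac{1}{N}\sum_i \cos(2\pi\langle w_i, t\rangle)$ to sufficient precision and accepts ``far'' iff this value is below $\frac{1}{2}$. Completeness: in the far case an honest prover lets $w_1,\dots,w_N$ be a ``typical'' sample from $D_{\calL^*}$; by Hoeffding the empirical average concentrates around $f_\calL(t) \le 2^{-\Omega(k)} < \frac{1}{2}$, and by standard second-moment bounds for discrete Gaussians together with matrix concentration, $\norm{M}_{\mathrm{op}} \le C$ with high probability --- so a valid certificate \emph{exists} (we never need to sample $D_{\calL^*}$ efficiently, only to know a good sample is out there). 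Soundness: in the near case, estimate (ii) forces any certificate passing check (b) to have empirical average at least $0.9 > \frac{1}{2}$, hence to be rejected, so the verifier never accepts ``far'' incorrectly. The $\gapsvp_\alpha$ statement then follows via the standard factor-preserving Karp reduction of the shortest-vector gap problem to the closest-vector gap problem (which produces $k$ closest-vector instances, one per basis vector, so that $\lambda_1(\calL) = \min_i \mathrm{dist}(t_i, \calL_i)$), applying the above certificate to each instance; one checks that all vectors involved have bit complexity $\poly(k,d,b)$.

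I expect the crux to be the \emph{soundness} argument: showing that the single linear-algebraic check (b) --- bounding the empirical second-moment matrix --- already rules out a cheating prover who supplies long or cleverly-aligned dual vectors that would spuriously push $\frac{1}{N}\sum_i \cos(2\pi\langle w_i, t\rangle)$ below $\frac{1}{2}$ even when $t$ is close to $\calL$. This is precisely the step that yields the $\sqrt{k}$ approximation factor rather than the $k$ factor one would get from a cruder per-vector norm bound (or from classical transference). The remaining work --- verifying that $\calL^*$ admits a basis of polynomial bit complexity, that a typical $D_{\calL^*}$ sample has polynomially bounded norm and hence polynomially sized integer coordinates, and that the cosine sums in check (c) can be evaluated to the required additive accuracy in polynomial time --- is routine but must be done carefully.
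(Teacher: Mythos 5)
Your proposal is correct and follows essentially the same argument as the paper's source for this statement: the paper does not prove Theorem~\ref{theorem:svp-conp} itself but imports it from Aharonov--Regev, whose proof is exactly the scheme you describe (the periodic Gaussian $\rho(\calL+t)/\rho(\calL)$ via Poisson summation, a witness of polynomially many dual-lattice vectors verified by a bounded second-moment check and a cosine-average test, Banaszczyk's bound for completeness, the robust near-case estimate for soundness, and a factor-preserving reduction between the shortest-vector and closest-vector gap problems). So this matches the intended justification; no gap to report.
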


\subsection{John ellipsoid}
Let $M \in \R^{m \times n}$ and define the centrally-symmetric polytope $P_M = \{x \in \R^n : \norm{Mx}_\infty \leq 1\}$ and the ball of radius $r$ as $B(0,r) = \{x \in \R^n: \norm{x}_2 \leq r\}$. John's Theorem guarantees the existence of an invertible linear transformation $T$ such that:
\[
B(0,1) \subseteq T(P_M) \subseteq B(0,\sqrt{n}).
\]
We will require an efficient algorithm for approximately computing such a $T$.

\begin{theorem}[Theorem 1.1 of \cite{cohen2019near}]
\label{thm:john-ellipsoid}
There is an efficient algorithm that given input $M \in \R^{m \times n}$ with $\poly(m,n)$-bit entries outputs an invertible linear transformation $T \in \R^{n \times n}$ satisfying:
\begin{equation}
\label{eqn:john-condition}
B(0,1) \subseteq T(P_M) \subseteq B(0,2\sqrt{n}).    
\end{equation}
\end{theorem}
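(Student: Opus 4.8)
\emph{Proof sketch (plan).} This is a constant-factor version of the classical algorithmic John ellipsoid (``$D$-optimal design'') result, so the plan is to recover it from the weight-optimization viewpoint of Khachiyan and of Cohen--Cousins--Lee--Yang. Assume $P_M$ is bounded, i.e.\ the rows $m_1,\dots,m_m$ of $M$ span $\R^n$ (otherwise no invertible $T$ satisfies \eqref{eqn:john-condition}, or one first restricts to $\mathrm{span}\{m_i\}$). For $w\in\R^m_{\ge 0}$ with $\sum_i w_i=1$, put $Q(w)=\sum_i w_i m_i m_i^\top$ and $\sigma_i(w)=m_i^\top Q(w)^{-1}m_i$; whenever $Q(w)\succ 0$ the trace identity gives $\sum_i w_i\sigma_i(w)=\mathrm{tr}(I_n)=n$. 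Two elementary inclusions do all the work. First, since $\sum_i w_i=1$, every $x\in P_M$ has $x^\top Q(w)x=\sum_i w_i(m_i^\top x)^2\le 1$, so $P_M\subseteq\{x:x^\top Q(w)x\le 1\}$. Second, if $\sigma_i(w)\le\rho$ for all $i$, then Cauchy--Schwarz gives $(m_i^\top x)^2\le\sigma_i(w)\,x^\top Q(w)x$, so $\{x:x^\top Q(w)x\le 1/\rho\}\subseteq P_M$. Hence for $T=(\rho\,Q(w))^{1/2}$ we get $B(0,1)\subseteq T(P_M)\subseteq B(0,\sqrt\rho)\subseteq B(0,2\sqrt n)$ as soon as $\rho\le 4n$, and it suffices to compute in $\poly(m,n,b)$ time a rational $w\ge 0$ with $\sum_i w_i=1$ and $\max_i\sigma_i(w)\le 4n$ (together with a rational substitute for the square root, handled below).

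To produce such a $w$ I would run the multiplicative-weights iteration for $D$-optimal design: initialize $w^{(0)}=(1/m)\mathbf 1$, for which $Q(w^{(0)})=\frac{1}{m}M^\top M\succ 0$ and $\sigma_i(w^{(0)})=m\,\ell_i\le m$ with $\ell_i\in[0,1]$ the $i$-th leverage score of $M$; then iterate $w_i^{(t+1)}=w_i^{(t)}\sigma_i(w^{(t)})/n$. This preserves $\sum_i w_i=1$ (since $\sum_i w_i^{(t)}\sigma_i(w^{(t)})=n$) and monotonically increases the concave potential $\log\det Q(w)$, and a standard potential argument (Khachiyan; see also Cohen--Cousins--Lee--Yang) shows $\max_i\sigma_i(w^{(t)})$ decreases toward its optimal value $n$, dropping below $2n$ after $\poly(m,n)$ iterations (and after zero iterations when $m\le 2n$). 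Each iteration costs one $n\times n$ inverse and $m$ quadratic forms, i.e.\ $\poly(m,n)$ operations; this is the step that Cohen--Cousins--Lee--Yang accelerate to near-linear time via leverage-score sketching, which we do not need.

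It remains to deal with exact arithmetic and the square root. Carrying the $w^{(t)}$ as rationals and rounding the computed $\sigma_i(w^{(t)})$ to $\poly(m,n,b)$ bits keeps every quantity rational of $\poly(m,n,b)$ bit-length, and since we only need $\max_i\sigma_i(w)\le 2n$ rather than the optimal $n$, the accumulated rounding error is harmless. Given the final $w$, instead of the irrational $(2n\,Q(w))^{1/2}$ I would output a rational $\tilde T$ with $2n\,Q(w)\preceq\tilde T^\top\tilde T\preceq 4n\,Q(w)$: compute the Cholesky factor $L$ of $3n\,Q(w)$, round it to a rational $\tilde L$ with $\|L^{-1}\tilde L-I\|$ small -- a \emph{relative} perturbation, which is what copes with the possibly exponential condition number of $M^\top M$ and requires $\poly(m,n,b)$ bits of precision -- and set $\tilde T=\tilde L^\top$. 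Then $\tilde T^\top\tilde T\succeq 2n\,Q(w)$ forces $\{x:x^\top\tilde T^\top\tilde T x\le 1\}\subseteq P_M$ by the second inclusion, and $\tilde T^\top\tilde T\preceq 4n\,Q(w)$ forces $P_M\subseteq\{x:x^\top\tilde T^\top\tilde T x\le 4n\}$ by the first, which is precisely \eqref{eqn:john-condition}. The only substantive ingredients are the monotonicity and $\poly(m,n)$-iteration convergence of the $D$-optimal iteration (classical, and very slack here since a factor $4n$ is tolerated) and the bit-complexity bookkeeping; the one point that I would expect to need genuine care is perturbing multiplicatively rather than additively in the last step, so that ill-conditioning of $M^\top M$ does not force super-polynomial precision.
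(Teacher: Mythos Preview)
The paper does not contain a proof of this statement: Theorem~\ref{thm:john-ellipsoid} is quoted as Theorem~1.1 of \cite{cohen2019near} and used as a black box, so there is no ``paper's own proof'' to compare against. Your sketch is therefore not in conflict with anything in the paper; it is, in fact, a reasonable reconstruction of the standard $D$-optimal design approach (Khachiyan's iteration, which Cohen--Cousins--Lee--Yang accelerate) that underlies the cited reference. The two inclusions you isolate and the multiplicative-weights update are exactly the classical ingredients, and since the paper only needs the constant-factor guarantee $B(0,1)\subseteq T(P_M)\subseteq B(0,2\sqrt{n})$ rather than the exact John ellipsoid, your slack analysis is appropriate. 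The one place I would flag for care is the boundedness assumption: in the paper's application $P_M$ is always bounded (the constraints $|x_j|\le 1$ are explicitly included in $S_\delta$), so your caveat about restricting to $\mathrm{span}\{m_i\}$ is moot there, but as a standalone proof of the stated theorem you are right that full column rank of $M$ is needed for an invertible $T$ to exist.
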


\section{Certifying discrepancy lower bounds}
The key subroutine in the algorithms behind Theorems~\ref{thm:disc-poly}-\ref{thm:ip-nondet} is Algorithm~\ref{alg:cert}. In this section, we state and analyze Algorithm~\ref{alg:cert}, after which the proofs of the main theorems will follow easily. The following lemma verifies that Algorithm~\ref{alg:cert} correctly certifies lower bounds on discrepancy.
 
\begin{figure}[htbp]
    \centering\begin{mdframed}[style=algo] 
\begin{enumerate}
    \item \textbf{Input:} Matrix $\Bar{A} \in \R^{m \times n}$ with $b$-bit entries and rows $\Bar{A}_1, \ldots, \Bar{A}_m$, parameters $\delta > 0, \alpha \geq 1$ and $\gapsvp_\alpha$ oracle $\calO$.
    
    \item Define $S_\delta = \{x \in \R^n: \lvert \ip{\Bar{A}_i}{x} \rvert \leq \delta, i=1,\ldots,m\} \cap [-1,1]^n$.
    
    \item Compute an invertible linear transformation $T \in \R^{n \times n}$ such that $B(0,1) \subseteq T(S_\delta) \subseteq B(0, 2\sqrt{n})$ (using the algorithm from Theorem~\ref{thm:john-ellipsoid}). 
    
    \item Define the lattices $\calL = T(\Z^n)$ and $\calL' = \frac{1}{2\sqrt{n}}\calL$.
    
    \item Query $\calO$ on input lattice $\calL'$. Output $\delta$ if $\lambda_1(\calL') > \alpha$. Otherwise, output $0$.
    
    \item \textbf{Output:} Value $\ALG(\Bar{A})$ that satisfies $\ALG(\Bar{A}) \leq \disc(\Bar{A})$.
\end{enumerate}
\end{mdframed}
\caption{Certification algorithm}
\label{alg:cert}
\end{figure}

\begin{lemma}[Correctness]
\label{lemma:correctness}
On any input $\Bar{A} \in \R^{m \times n}$ with $b$-bit entries and any $\delta > 0$, Algorithm~\ref{alg:cert} satisfies $\ALG (\Bar{A}) \leq \disc (\Bar{A})$. 
\end{lemma}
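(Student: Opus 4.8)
The plan is to case on the value returned by Algorithm~\ref{alg:cert}. If it outputs $0$ there is nothing to prove, since $\disc(\Bar{A})$ is a minimum of $\ell_\infty$-norms and hence nonnegative. So the entire content is the case in which the $\gapsvp_\alpha$ oracle reports $\lambda_1(\calL') > \alpha$ and the algorithm outputs $\delta$; here I will show $\disc(\Bar{A}) > \delta$. The first step is to pass from the oracle's answer about $\calL'$ to a statement about $\calL = T(\Z^n)$: since $\calL' = \frac{1}{2\sqrt{n}}\calL$ we have $\lambda_1(\calL) = 2\sqrt{n}\,\lambda_1(\calL')$, and $\alpha \ge 1$ gives $\lambda_1(\calL') > 1$, hence $\lambda_1(\calL) > 2\sqrt{n}$. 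In other words, $T(\Z^n)$ contains no nonzero vector inside the ball $B(0,2\sqrt{n})$.

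The second step is the geometric core. By Step~3 of the algorithm, $T$ is invertible and $T(S_\delta)\subseteq B(0,2\sqrt{n})$; combined with the previous step this forces $T(\Z^n)\cap T(S_\delta)=\{0\}$, and applying $T^{-1}$ gives $\Z^n\cap S_\delta=\{0\}$. In particular no point of $\{\pm1\}^n$ lies in $S_\delta$. But every signing $x\in\{\pm1\}^n$ automatically lies in $[-1,1]^n$, so the only way $x\notin S_\delta$ is that $\lvert\ip{\Bar{A}_i}{x}\rvert>\delta$ for some $i$, i.e.\ $\norm{\Bar{A}x}_\infty>\delta$. Taking the minimum over $x\in\{\pm1\}^n$ yields $\disc(\Bar{A})>\delta$, which is the desired bound. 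Along the way one should check that $S_\delta$ is the kind of body Theorem~\ref{thm:john-ellipsoid} applies to, so that $T$ in Step~3 is well defined: $S_\delta$ is convex, centrally symmetric, bounded (being contained in $[-1,1]^n$), and full-dimensional (since $\delta>0$ it contains a ball around the origin). Concretely, $S_\delta=P_M$ for the $(m+n)\times n$ matrix $M$ whose first $m$ rows are $\Bar{A}_i/\delta$ and whose last $n$ rows are $e_1,\dots,e_n$.

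The only point requiring a little care is the handling of the $\gapsvp_\alpha$ oracle outside its promise region: if $1<\lambda_1(\calL')\le\alpha$ the promise fails, but the sole consequence we use of an ``$\lambda_1(\calL')>\alpha$'' answer is the weaker inequality $\lambda_1(\calL')>1$, which is sound for any valid oracle because the alternative $\lambda_1(\calL')\le 1$ already satisfies the promise and so cannot be misreported. Everything else is a routine chain of set inclusions, so I do not expect a genuine obstacle here; all quantitative and high-probability content (the admissible values of $\delta$, and the bit-complexity needed to actually run the John-ellipsoid and SVP subroutines) is deferred to the proofs of Theorems~\ref{thm:disc-poly}--\ref{thm:ip-nondet}.
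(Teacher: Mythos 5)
Your proof is correct and is essentially the paper's argument in contrapositive form: both rest on the same two facts, namely that a signing of discrepancy below $\delta$ would give a nonzero point of $\calL'$ of norm at most $1$ via the inclusion $T(S_\delta) \subseteq B(0,2\sqrt{n})$ and invertibility of $T$, and that the $\gapsvp_\alpha$ oracle cannot report $\lambda_1(\calL') > \alpha$ in that case. Your added checks (the oracle's behavior off-promise and that $S_\delta$ is a legitimate input for the John-ellipsoid subroutine) are fine but do not constitute a different route.
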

\begin{proof}
To prove the claim, it suffices to the consider the case that $\disc(\Bar{A}) < \delta$. In this case, there exists $x \in \{\pm 1\}^n$ such that $\norm{\Bar{A} x}_\infty < \delta$. In particular, it holds that $x \neq 0$ and $x \in S_\delta \cap \Z^n$. Next, note that $Tx \neq 0$ (by invertibility of $T$) and $Tx \in T(S_\delta) \cap \calL \subseteq B(0, 2\sqrt{n}) \cap \calL$. Together, these imply that $\lambda_1(\calL') \leq 1$.
By correctness of the $\gapsvp_\alpha$ oracle, Algorithm~\ref{alg:cert} will return $0$, so we may conclude $\ALG (\Bar{A}) \leq \disc (\Bar{A})$.
\end{proof}

Lemma~\ref{lemma:cert-value} below characterizes the high-probability value certified by Algorithm~\ref{alg:cert} in terms of various parameters of the input distribution.

\begin{lemma}
\label{lemma:cert-value}
Assume that $m \leq n$ and let $A \in \R^{m \times n}$ with entries \iid according to a continuous distribution $\calD$ with density bounded by 1 and tail function $Q_\calD (t) = \Prob_{z \sim \calD}(|z| > t)$. Next, suppose there are $\delta > 0$, $b \in \N$, $M \in [0, 2^b-1]$ so that the following conditions are satisfied:
\begin{enumerate}
    \item $\delta \leq \exp (-4 n \log (2 \alpha \sqrt{n})/m)$
    \item $Q_{\calD} (M) = o(n^{-1} (2 \alpha \sqrt{n} + 1)^{-n/m})$
    \item $b \geq 2 \log_2 (M \alpha n) + 2n \log_2 (2\alpha \sqrt{n} + 1)/m$
\end{enumerate}
Then on input $\Bar{A} \in \R^{m \times n}$ which is the entry-wise $b$-bit truncation of $A$, Algorithm~\ref{alg:cert} satisfies $\ALG (\Bar{A}) \geq \delta$ \whp.
\end{lemma}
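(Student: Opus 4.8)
The plan is to reduce everything to showing that, with high probability, the lattice $\calL' = \tfrac{1}{2\sqrt{n}}T(\Z^n)$ built by Algorithm~\ref{alg:cert} has $\lambda_1(\calL') > \alpha$. Indeed, in that case the $\gapsvp_\alpha$ promise is satisfied (we are in the branch $\lambda_1(\calL') > \alpha$), so the oracle $\calO$ correctly reports $\lambda_1(\calL') > \alpha$ and the algorithm outputs $\delta$; conversely, even on the exceptional event where this fails, Lemma~\ref{lemma:correctness} still guarantees $\ALG(\Bar{A}) \le \disc(\Bar{A})$, so nothing can go wrong. Since $\lambda_1(\calL') > \alpha$ is equivalent to $\lambda_1(T(\Z^n)) > 2\alpha\sqrt{n}$, the goal becomes: whp there is no $x \in \Z^n \setminus \{0\}$ with $\norm{Tx}_2 \le 2\alpha\sqrt{n}$.

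The first step is to translate this lattice event into an event about the entries of $\Bar{A}$. If $x \in \Z^n \setminus \{0\}$ satisfies $\norm{Tx}_2 \le 2\alpha\sqrt{n}$, then $T(\tfrac{x}{2\alpha\sqrt{n}}) \in B(0,1) \subseteq T(S_\delta)$, so by invertibility of $T$ we get $\tfrac{x}{2\alpha\sqrt{n}} \in S_\delta$; unpacking the definition of $S_\delta$ this says precisely that $\norm{x}_\infty \le 2\alpha\sqrt{n}$ and $\lvert\ip{\Bar{A}_i}{x}\rvert \le 2\alpha\sqrt{n}\,\delta$ for all $i \in [m]$. There are at most $(4\alpha\sqrt{n} + 1)^n \le (2\alpha\sqrt{n} + 1)^{2n}$ integer vectors with $\norm{x}_\infty \le 2\alpha\sqrt{n}$, so it suffices to union-bound the probability that a fixed such $x$ satisfies the linear constraints.

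The second step passes from $\Bar{A}$ to the untruncated matrix $A$, whose entries are independent with density at most $1$. I would condition on the event $E = \{\lvert A_{ij}\rvert \le M \text{ for all } i,j\}$, which holds with probability $1 - o(1)$ by condition~(2) and a union bound over the $mn$ entries. On $E$ the per-entry truncation error is $O(M 2^{-b})$, which by condition~(3) is small enough that, for any $x$ with $\norm{x}_\infty \le 2\alpha\sqrt{n}$, $\lvert\ip{\Bar{A}_i - A_i}{x}\rvert$ is negligible compared to $2\alpha\sqrt{n}\,\delta$; hence on $E$ the constraints above force $\max_i \lvert\ip{A_i}{x}\rvert \le \gamma$ for some $\gamma = O(\alpha\sqrt{n}\,\delta)$. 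For a fixed nonzero $x$, choose a coordinate $j_0$ with $x_{j_0} \neq 0$, so $\lvert x_{j_0}\rvert \ge 1$; conditioning on the other entries of row $i$, the conditional density of $\ip{A_i}{x}$ is at most $1/\lvert x_{j_0}\rvert \le 1$, whence $\Prob(\lvert\ip{A_i}{x}\rvert \le \gamma) \le 2\gamma$, and by independence of the rows $\Prob(\max_i \lvert\ip{A_i}{x}\rvert \le \gamma) \le (2\gamma)^m$. Combining, the failure probability is at most $\Prob(E^c) + (2\alpha\sqrt{n} + 1)^{2n}(2\gamma)^m$, and condition~(1) (the slack factor $4$ in its exponent, together with $m \le n$ and condition~(3) controlling the truncation contribution to $\gamma$) makes $(2\gamma)^m$ small enough to beat the $(2\alpha\sqrt{n} + 1)^{2n}$ term, so the whole expression is $o(1)$.

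I expect the main obstacle to be the third step: getting the truncation-error accounting right — deciding how much slack to leave in $\gamma$ — and then checking that conditions~(1)--(3) with their specific constants really do drive $(2\alpha\sqrt{n}+1)^{2n}(2\gamma)^m$ to zero. By contrast, the reduction to $\gapsvp$, the role of the John ellipsoid, and the one-coordinate anti-concentration estimate are all conceptually routine.
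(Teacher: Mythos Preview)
Your proposal is correct and follows essentially the same route as the paper: reduce to showing $\lambda_1(\calL) > 2\alpha\sqrt{n}$, use the John ellipsoid inclusion to translate this into $2\alpha\sqrt{n}\,S_\delta \cap \Z^n = \{0\}$, then union-bound over the integer points of $[-2\alpha\sqrt{n},2\alpha\sqrt{n}]^n$ and apply a one-coordinate anti-concentration estimate (which the paper packages as a separate Lemma~\ref{lemma:anti-conc}) together with the truncation bound on $E$. The only cosmetic difference is that the paper folds the conditioning on $E$ and the truncation error into the per-row anti-concentration lemma rather than conditioning globally first, and it writes $|\calU| \le (2\alpha\sqrt{n}+1)^n$ where you (more carefully) use $(4\alpha\sqrt{n}+1)^n \le (2\alpha\sqrt{n}+1)^{2n}$.
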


In order to prove Lemma~\ref{lemma:cert-value}, we use the following result concerning the anti-concentration of the rows of the input matrix.

\begin{lemma}
\label{lemma:anti-conc}
Let $A \in \R^{n}$ have coordinates \iid according to a continuous distribution $\calD$ with density bounded by 1, tail function $Q_\calD$,  $\Bar{A}$ be its $b$-bit truncation, and $y \in \R^n$ be any vector satisfying $1 \leq \norm{y}_\infty \leq \beta$. Next, let $\theta \geq 0$ and $M \geq 0$. Then, we have:
\[
\Prob (\lvert \ip{\Bar{A}}{y} \rvert \leq \theta) \leq 2 \theta + \frac{2 M n \beta}{2^b} + n Q_{\calD} (M).
\]
\end{lemma}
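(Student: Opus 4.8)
The plan is to reduce the anti-concentration of the truncated inner product $\ip{\Bar{A}}{y}$ to that of the exact inner product $\ip{A}{y}$, and then establish the latter by a standard one-coordinate conditioning (small-ball) argument. First I would introduce the good event $E = \{\,|A_i| \le M \text{ for all } i \in [n]\,\}$. A union bound over the $n$ coordinates gives $\Prob(E^c) \le n Q_\calD(M)$, and since $\Prob(|\ip{\Bar{A}}{y}| \le \theta \text{ and } E^c) \le \Prob(E^c)$, this accounts for the last term in the claimed inequality.

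Next I would control the truncation error on the event $E$. By definition of the $b$-bit truncation, retaining $b$ significant bits of a number of magnitude at most $M$ incurs absolute error at most $M 2^{-b}$, so on $E$ we have $|A_i - \Bar{A}_i| \le M 2^{-b}$ for every $i$, and hence, using $\norm{y}_\infty \le \beta$,
\[
\left| \ip{\Bar{A}}{y} - \ip{A}{y} \right| \;\le\; \sum_{i=1}^n |A_i - \Bar{A}_i|\,|y_i| \;\le\; \frac{M n \beta}{2^b}.
\]
Consequently, on $E$ the event $|\ip{\Bar{A}}{y}| \le \theta$ implies $|\ip{A}{y}| \le \theta' := \theta + \frac{M n \beta}{2^b}$, and dropping the conditioning on $E$ to pass to the larger unconditional event yields
\[
\Prob\!\left( |\ip{\Bar{A}}{y}| \le \theta \right) \;\le\; \Prob\!\left( |\ip{A}{y}| \le \theta' \right) + n Q_\calD(M).
\]

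It then remains to show $\Prob(|\ip{A}{y}| \le \theta') \le 2\theta'$. Here I would use the hypothesis $\norm{y}_\infty \ge 1$ to fix a coordinate $j^\ast$ with $|y_{j^\ast}| \ge 1$, and condition on all coordinates $(A_i)_{i \ne j^\ast}$. Given this conditioning, $\ip{A}{y} = y_{j^\ast} A_{j^\ast} + c$ for a constant $c$, so $\{|\ip{A}{y}| \le \theta'\}$ requires $A_{j^\ast}$ to lie in an interval of length $2\theta' / |y_{j^\ast}| \le 2\theta'$; since $A_{j^\ast}$ has density bounded by $1$, the conditional probability of this is at most $2\theta'$. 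Averaging over the conditioning and substituting $\theta' = \theta + \frac{M n \beta}{2^b}$ gives the stated bound $2\theta + \frac{2Mn\beta}{2^b} + n Q_\calD(M)$. All steps are elementary; I do not expect a genuine obstacle, the only points requiring care being the precise truncation-error bound and the observation that the conditioning argument is legitimate because $\Bar{A}$ is a deterministic function of $A$.
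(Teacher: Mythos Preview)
Your proposal is correct and follows essentially the same approach as the paper: introduce the event $E=\{\|A\|_\infty\le M\}$, bound the truncation error by $Mn\beta/2^b$ on $E$, and then apply a one-coordinate small-ball bound using the density bound and $|y_{j^\ast}|\ge 1$. The only cosmetic difference is that the paper keeps the conditioning on $E$ through the small-ball step, whereas you drop it and bound $\Prob(|\ip{A}{y}|\le\theta')$ unconditionally; your version is arguably cleaner since it sidesteps any concern about how conditioning on $E$ affects the density of $A_{j^\ast}$.
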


\begin{proof}[Proof of Lemma~\ref{lemma:cert-value}]
To prove the claim, it suffices to show that $\lambda_1(\calL) > 2 \alpha \sqrt{n}$ \whp on random input $\Bar{A}$. By definition, this means $B(0,2 \alpha \sqrt{n}) \cap \calL = \{0\}$, which is in turn implied by $T^{-1}(B(0, 2 \alpha \sqrt{n})) \cap \Z^n = \{0\}$ (because $T$ is invertible). By Condition~\ref{eqn:john-condition}, it holds that $T^{-1}(B(0, 2 \alpha \sqrt{n})) = 2 \alpha \sqrt{n} T^{-1} (B(0,1)) \subseteq 2 \alpha \sqrt{n} S_\delta$. The proof will be complete by showing that $2\alpha \sqrt{n} S_\delta \cap \Z^n = \{0\}$ with high probability. Define the set $\calU = \Z^n \cap [-2\alpha \sqrt{n}, 2\alpha \sqrt{n}]^n \setminus \{0\}$. Then we conclude with:
\begin{align*}
\Prob (2 \alpha \sqrt{n} S_\delta \cap \Z^n \neq \{0\}) &\leq | \calU | \cdot \max_{y \in \calU} \Prob (\left| \ip{\Bar{A}_1}{y} \right| \leq 2 \alpha \sqrt{n} \delta)^m \\
&\leq |\calU | \cdot (4 \alpha \sqrt{n} \delta + \frac{4 M \alpha n^{1.5}}{2^b} + n Q_\calD(M))^m \\
&\leq (2\alpha \sqrt{n} + 1)^n \cdot (4 \alpha \sqrt{n} \delta + \frac{4 M \alpha n^{1.5}}{2^b} + n Q_\calD(M))^m = o(1)
\end{align*}
where the second inequality follows from Lemma~\ref{lemma:anti-conc} and the final equality follows from the assumptions on $\delta, M, b$.
\end{proof}

\begin{proof}[Proof of Lemma~\ref{lemma:anti-conc}]
Defining $E$ to be the event that $\norm{A}_\infty \leq M$, we have that:
\begin{align*}
\Prob (\lvert \ip{\Bar{A}}{y}\rvert \leq \theta) &= \Prob(E) \Prob (\lvert \ip{\Bar{A}}{y}\rvert \leq \theta|E) + \Prob (E^c) \Prob (\lvert \ip{\Bar{A}}{y}\rvert \leq \theta |E^c) \\
&\leq \Prob (\lvert \ip{\Bar{A}}{y} \rvert \leq \theta | E) + \Prob (E^c).
\end{align*}
By a union bound and definition of the tail function, the second term is upper bounded by $n Q_\calD (M)$. To control the first term, note that on the events $E$ and $\lvert\ip{\Bar{A}}{y} \rvert \leq \theta$, 
\begin{align*}
\lvert\ip{A}{y}\rvert &\leq \lvert \ip{\Bar{A}}{y} \rvert + \lvert \ip{A - \Bar{A}}{y} \rvert \\
&\leq \theta + \norm{A - \Bar{A}}_\infty \norm{y}_1 \\
&\leq \theta + \norm{A - \Bar{A}}_\infty \cdot n \norm{y}_\infty \\
&\leq \theta + \frac{M n \beta}{2^b}.
\end{align*}
Setting $\theta' = \theta + \frac{M n \beta}{2^b}$ and assuming $y_1 \geq 1$, without loss of generality, we can control the first term as follows:
\begin{align*}
\Prob (\lvert \ip{\Bar{A}}{y} \rvert \leq \theta | E) &\leq \Prob (\lvert \ip{A}{y} \rvert \leq \theta'| E) 
\\
&\leq \Prob \left( - \sum_{i=2}^n A_i \frac{y_i}{y_1} - \frac{\theta'}{y_1} \leq A_1 \leq - \sum_{i=2}^n A_i \frac{y_i}{y_1} + \frac{\theta'}{y_1} \middle| E \right)\\
&\leq \frac{2 \theta'}{y_1} \leq 2 \theta'.
\end{align*}
\end{proof}

\subsection{Proofs of main results}
Equipped with the above technical lemmas, we now prove Theorems~\ref{thm:disc-poly} and \ref{thm:ip-poly}. The proofs of Theorems~\ref{thm:disc-nondet} and \ref{thm:ip-nondet} follow in the same way, but by using the algorithm in Theorem~\ref{theorem:svp-conp} as a $\gapsvp_\alpha$ oracle instead of the LLL algorithm. 

\begin{proof}[Proof of Theorem~\ref{thm:disc-poly}]
The certification procedure is as follows:
\begin{enumerate}
    \item First, set $u = O(\log (mn))$ and certify that $|A_{ij}| \leq u$ for all $i \in [m], j \in [n]$ by inspecting the first $b = 8n^3$ bits of each entry $A_{ij}$; record this $b$-bit truncation in $\Bar{A}_{ij}$. If for some $(i,j)$ it holds that $|\Bar{A}_{ij}| > u$, then output 0. Otherwise, proceed to the next step.
    \item Run Algorithm~\ref{alg:cert} with parameters $\alpha = 2^{n/2}, \delta = \exp(-6n^2 /m), b = 8n^3$ and the LLL algorithm as a $\gapsvp_\alpha$ oracle on input $\Bar{A}$. Output $\ALG (\Bar{A}) - \frac{u n}{2^b}$.
\end{enumerate}
To prove correctness of the procedure, we show that for any $A$, the value it outputs for instance $A$ is a lower bound on $\disc(A)$. If there is $i \in [m], j \in [n]$ such that $|A_{ij}| > u$, then the procedure outputs $0$ (which trivially lower bounds $\disc (A)$) in the first step. If $|A_{ij}| \leq u$ for all $i \in [m], j \in [n]$, then $\disc (A) \geq \disc (\Bar{A}) - \frac{u n}{2^b}$. By Lemma~\ref{lemma:correctness}, we also have $\disc(\Bar{A}) \geq \ALG (\Bar{A})$. Hence, the value output by the procedure is always a lower bound on $\disc(A)$.

Next, note that the runtime of the procedure is dominated by the approximate John ellipsoid computation and the call to the $\gapsvp_\alpha$ oracle in Algorithm~\ref{alg:cert}. By Theorems~\ref{theorem:lll} and \ref{thm:john-ellipsoid}, each of these steps can be implemented in deterministic $\poly(m,n,b)$ time.

We now analyze the high-probability value certified by this procedure $A_{ij} \sim \calN(0,1)$. Invoking Lemma~\ref{lemma:cert-value} with $\calD = \calN (0,1)$, $M = \Theta (n \log(n)/ \sqrt{m})$, $Q_\calD (M) \leq 2 \exp (-M^2/2)$, which satisfy the hypotheses, we conclude that the procedure certifies $\disc(A) \geq \exp (- O(n^2  / m))$ \whp.
\end{proof}

\begin{proof}[Proof of Theorem~\ref{thm:ip-poly}]
First, observe that if $a_1, \ldots, a_n \in \{0, 2^{b} - 1\}$, then certifying that no perfect partition of $a_1, \ldots, a_n$ exists is equivalent to certifying that $\disc(\Bar{A}) > 2^{-b}$, where $\Bar{A} = (a_1/2^b, \ldots, a_n/ 2^b)$. Note also that $\Bar{A}$ has the same distribution as that of the entrywise $b$-bit truncation of a random vector $A \in \R^n$ with entries \iid according to $\Unif ([0,1])$. To certify the non-existence of a perfect partition, run Algorithm~\ref{alg:cert} on input $\Bar{A}$ with parameters $\alpha = 2^{n/2}$, $\delta = 2^{-b}$ and the LLL algorithm as a $\gapsvp_\alpha$ oracle. If $\ALG (\Bar{A}) > 0$, then report that no perfect partition exists.

The correctness of this procedure follows immediately from Lemma~\ref{lemma:correctness}: if on any input $a_1, \ldots, a_n$ it holds that $\ALG (\Bar{A}) > 0$, then $\ALG (\Bar{A}) = \delta = 2^{-b}$ and $\disc(\Bar{A}) > 2^{-b}$. Hence, the procedure will never report the non-existence of a perfect partition if one exists.

As in the proof of Theorem~\ref{thm:disc-poly}, the runtime of the procedure is dominated by the approximate John ellipsoid computation and the call to the $\gapsvp_\alpha$ oracle, each of which can be implemented in deterministic $\poly(n,b)$ time.

Next, we show that if $b$ is sufficiently large, then the procedure reports the non-existence of a perfect partition with high probability. Invoking Lemma~\ref{lemma:cert-value} with $\calD = \Unif ([0,1])$, $M = 1$, $Q_\calD (M) = 0$, which satisfy the hypotheses provided $b \geq C n^2$ for some sufficiently large absolute constant $C > 0$, we conclude that \whp, Algorithm~\ref{alg:cert} on input $\Bar{A}$ outputs $\delta = 2^{-b} > 0$.

\end{proof}

\section*{Acknowledgments}
The author would like to thank Boaz Barak, Alex Wein and Ilias Zadik for helpful discussions.

\newpage
\bibliographystyle{alpha}
\bibliography{bib}
\end{document}